\def\BibTeX{{\rm B\kern-.05em{\sc i\kern-.025em b}\kern-.08em
    T\kern-.1667em\lower.7ex\hbox{E}\kern-.125emX}}
\newtheorem{theorem}{Theorem}
\newtheorem{definition}[theorem]{Definition}
\newtheorem{lemma}[theorem]{Lemma}
\newcommand{\sfTr}[1]{\mathrm{Tr}\left\{#1\right\}}
\def\sfB{\mathsf{B}}
\def\bbH{\mathbb{H}}
\def\bbS{\mathbb{S}}
\def\bbL{\mathbb{L}}
\def \be {\begin{equation}}
\def \ee {\end{equation}}
\def \Re{\mathrm{Re}\,}
\def \Im{\mathrm{Im}\,}
\def \e{\mathrm{e}}
\def \m{\mathrm{m}}
\def \e1{(\mathrm{e})}
\def \m1{(\mathrm{m})}
\def \post{\mathrm{post}}
\newcommand{\lamin}[2]{\langle#1,#2\rangle_{\rhoB}^{(\lambda)}}
\def \cH{{\cal H}}
\def \cX{{\cal X}}
\def \bbc{{\mathbb C}}
\def \sofc2{{\cal S}({\mathbb C}^2)}
\def \prior{\pi(\theta)}
\def \htheta{\hat{\theta}}
\def \cNH{{\cal C}_\mathrm{BNH}}
\def \rhoB{{S}_{\mathrm{B}}}
\def \MBj{{D}_{\mathrm{B},j}}
\def \MBk{{D}_{\mathrm{B},k}}
\def \Mtheta{D(\theta)}
\def \sfZB{{Z}_{\mathrm{B}}}
\def \sfHB{{H}_{\mathrm{B}}}
\def \bbSbar{\overline{\bbS}}
\def \Mbar{\overline{D}}
\def \Mbarj{\overline{D}_j}
\def \sfwbar{\overline{w}}
\newcommand{\R}{\mathds{R}}
\newcommand{\C}{\mathds{C}}
\newcommand{\Li}{\mathcal{L}}
\newcommand{\Ccal}{\mathcal{C}}
\newcommand{\Sbb}{\mathbb{S}}
\newcommand{\Xds}{\mathds{X}}
\newcommand{\half}{\frac{1}{2}}
\newcommand{\Real}{\mathrm{Re}\,}
\newcommand{\Imag}{\mathrm{Im}\,}
\newcommand{\Hil}{\mathcal{H}}
\newcommand{\E}{\mathcal{E}}
\newcommand{\plam}{\frac{1+\lambda}{2}}
\newcommand{\mlam}{\frac{1-\lambda}{2}}
\newcommand{\hattheta}{\hat{\theta}}
\newcommand{\ta}{\theta}
\newcommand{\la}{\lambda}
\newcommand{\sq}{\sqrt}
\newcommand{\ot}{\otimes}
\newcommand{\braket}[2]{\langle #1|#2 \rangle}
\newcommand{\Tr}{\mathrm{Tr}}
\newcommand{\tr}{\mathrm{tr}}
\newcommand{\Ttr}{\mathds{T}\mathrm{r}}
\newcommand{\inner}[2]{\langle #1,#2 \rangle}
\newcommand{\inv}{{-1}}
\newcommand{\ol}{\overline}
\begin{document}

\title{Bayesian Logarithmic Derivative Type Lower Bounds for Quantum Estimation
%{\footnotesize \textsuperscript{*}Note: Sub-titles are not captured in Xplore and
%should not be used}
\thanks{This work was supported in part by JSPS KAKENHI Grant Numbers JP21K11749, JP24K14816.}
}

\author{
\IEEEauthorblockN{1\textsuperscript{st} Jianchao Zhang}
\IEEEauthorblockA{\textit{Graduate School of Informatics and Engineering,} \\
\textit{The University of Electro-Communications, }\\
Tokyo, 182-8585 Japan \\
c2141016@edu.cc.uec.ac.jp}
\and
\IEEEauthorblockN{2\textsuperscript{nd} Jun Suzuki}
\IEEEauthorblockA{\textit{Graduate School of Informatics and Engineering,} \\
\textit{The University of Electro-Communications, }\\
Tokyo, 182-8585 Japan \\
junsuzuki@uec.ac.jp}

}

\maketitle

\begin{abstract}
Bayesian approach for quantum parameter estimation has gained a renewed interest from practical applications of quantum estimation theory. Recently, a lower bound, called the Bayesian Nagaoka-Hayashi bound for the Bayes risk in quantum domain was proposed, which is an extension of a new approach to point estimation of quantum states by Conlon {\it et al.} (2021). The objective of this paper is to explore this Bayesian Nagaoka-Hayashi bound further by obtaining its lower bounds. We first obtain one-parameter family of lower bounds, which is an analogue of the Holevo bound in point estimation. Thereby, we derive one-parameter family of Bayesian logarithmic derivative type lower bounds in a closed form for the parameter independent weight matrix setting. This new bound includes previously known Bayesian lower bounds as special cases.  

%Maximizing information extraction stands as a critical pursuit in multiparameter estimation within quantum metrology. One recent approach tackles this by offering a Bayesian adaptation of the quantum bound, cast as an optimization problem. However, deriving the explicit closed form of the Bayesian quantum bound remains an unresolved issue. Additionally, grappling with parameter-dependent weights poses a significant challenge. In this work, we propose two bounds to address these pressing concerns.
\end{abstract}

\begin{IEEEkeywords}
Quantum estimation, Bayes risk, Bayesian Nagaoka-Hayashi bound 
\end{IEEEkeywords}

\section{Introduction}

Quantum estimation are pervasive in real-world scenarios, ranging from tracking moving objects to implementing quantum sensors \cite{proctor2018multiparameter,dinani2019bayesian}. These problems often involve multiple unknown parameters, such as determining the velocity of an object or quantifying phase information \cite{zhuang2017entanglement,szczykulska2017reaching}. In tackling these challenges, existing literature has extensively relied on bounds like the multiparameter Cram\'er-Rao bound and its quantum extension by Helstrom \cite{helstrom1976}. While powerful, these bounds have limitations, particularly when faced with practical constraints such as limited measurement data and/or moderate prior knowledge.

Recent research has explored alternatives, including the Holevo bound, which offers more information but still requires locally unbiased estimators or infinite numbers of experimental repetitions to make it sense \cite{holevobook}. To overcome these difficulty, Conlon {\it et al.} \cite{conlon2021efficient} proposed a new approach which is applicable for locally unbiased and  finite sample setting. This new bound was named after the historical contribution as the Nagaoka-Hayashi bound. To address these challenges in practical situation, the Bayesian framework emerges as a promising approach due to its ability to incorporate prior information effectively without assumption of unbiasedness.  This was initiated by Personick in his seminal work and it established optimal lower bound for one-parameter case \cite{personick71}. Since Personick's work, many attempts were made to extend it to multiparameter estimation. However, current solutions still face many challenges, in particular when the Bayes risk is defined by parameter-dependent weight matrix. Tsang \cite{tsang2020physics} proposed a quantum version of the Gill-Levit bound \cite{GillLevit1995} to incorporate parameter-dependent weight matrix. The Tsang bound, however, is expressed as non-trivial optimization. In a recent work, the Bayesian version of the Nagaoka-Hayashi bound was proposed in terms of a semidefinite programming problem, which is efficiently computable \cite{suzuki2024bayesian}. 

In Suzuki's work \cite{suzuki2024bayesian}, it was also attempted to derive lower bounds which is easy to evaluate than the Bayesian Nagaoka-Hayshi bound, but only a partial solution was given. In this paper, we aim at deriving the more explicit form of the Bayesian lower bound for weighted trace of the mean squared error (MSE) with parameter-independent weight matrices, which is expressed only in terms of averaged quantities over the prior information. This lower bound (Theorem \ref{thm:BH2}) is regarded as a Bayesian version of the Holevo bound in point estimation, since it is given as optimization of a function of the set of Hermitian operators. Based on our result, the close form of one parameter family in parameter-independent weight matrix follows up immediately from our result (Theorem \ref{thm:BH3}). It is shown that the latter bound is a generalization of previously known Bayesian lower bounds, since it is reduced to the Personick bound and its variants as special cases.

%To illustrate the effectiveness of our approach, we explain the relation among our bound, the Holevo bound and the Personick bound in parameter-independent weight. Finally, we discuss the implications and potential extensions of our results, highlighting the significance of our proposed framework in addressing estimation challenges.

\section{Preliminaries}

\subsection{Optimal Bayesian MSE for classical estimation}\label{seq:optclassical}
In this section, we recall the Bayesian classical case at first. The notations referring to the Bayesian statistics are listed as bellows: 
\begin{itemize}
    \item $p_\theta(x)=p(x|\theta)$: A parametric model (likelihood) = A family of probability distribution on $\mathcal{X}$.
    \item $\pi(\theta)$: A prior distribution for the parameter space $\Theta$.
    \item $q(\theta|x)$: A posterior distribution for a datum $x\in\cX$.
    \item $p_x(x)$: A marginal distribution (evidence).
\end{itemize}
The expectation value of a random variable $X$ with respect to the model $p_\theta$ is denoted by $E_\theta[X]=\int_x x\, p_\theta(x) dx$, whereas the expectation value with respect to the joint distribution $\pi p_\theta$ is expressed without subscript, and the expectation value with respect to the marginal $p_X$ is denoted as $E_X[X]$.  
%To calculate the estimate of observed value, we use 
An estimator that returns values on the set $\Theta$ is denoted as $\hattheta(x)=(\hattheta_1(x),\hattheta_2(x),\cdots,\hattheta_n(x) )$. 
The main objective of this proceedings is to minimize the average of the weighted trace of the MSE matrix. 
We consider the most general problem with a parameter-dependent weight matrix, which is $n\times n$ positive definite matrix: $W(\theta)=[W_{jk} (\theta)] > 0$. The Bayesian risk is then defined by
\begin{equation}
    R_\mathrm{B} [\hattheta|W] = \int d\theta \pi(\theta) \Tr \left[ W(\theta) V_\theta [ \hattheta] \right],
\end{equation}
where $V_{\theta} [\hattheta]$ is the MSE matrix with the $j,k$ elements 
\begin{equation}
    V_{\theta,jk} [\hattheta] = E_\theta \left[ (\hattheta_j(X)-\theta_j) (\hattheta_k(X)-\theta_k) \right].
\end{equation}
Throughout the paper, $\Tr[\cdot ]$ is the trace over $n$-dimensional parameter space. 

The optimal estimator is known to be given as follows. 
First define an inner product on the space of $n$-dimensional vectors of random variables $F(X)=(f_1(X),f_2(X),\ldots,f_n(X))$ and $G=(g_1(X),g_2(X),\ldots,g_n(X))$ by 
$\braket{F}{G}:=\sum_{j=1}^nE_X[f_j(X)g_j(X)]$. The Bayes risk is minimized by completing the square as
\begin{equation*}
\begin{split}
    R_\mathrm{B}[\hattheta|W] &= \braket{\hattheta}{{W}^\post\hattheta} - 2 \braket{{d}^\post}{\hattheta}
    +\ol{w} \\
    & \geq \braket{{d}^\post}{({W}^\post)^\inv {d}^\post}+\ol{w},\label{def:const}
%    R_\mathrm{B}[\hattheta|W] &= \braket{\hattheta(X)}{\ol{W}(x)\hattheta(X)} - 2 \braket{\ol{\ell}_W(X)}{\hattheta(X)}
%    +\ol{\braket{\theta}{W(\theta) \theta}} \\
%    & \geq \braket{\ol{\ell}_W(X)}{\ol{W}(X)^\inv \ol{\ell}_W(X)}+\ol{\braket{\theta}{W(\theta) \theta}},
\end{split}
\end{equation*}
where the first equality follows by defining the following quantities. 
\begin{equation}
\begin{split}
    {W}^\post_{ij}(x)&:=\int d\theta q(\theta|x)W_{ij}(\theta), \\ 
    {d}^\post_{j}(x)&:= \sum_{i=1}^n \int d\theta q(\theta|x)W_{ij}(\theta)\theta_i,\\
    \ol{w} &:=\int \,d\theta\,\prior \sum_{i,j=1}^n\theta_j W_{ij}(\theta)\theta_j.
\end{split}
\end{equation}
Furthermore, the optimal estimator is identified as
\begin{equation}
    \hattheta_{\mathrm{opt},i}(x) = \sum_{j=1}^n\left( ({W}^\post(x))^{-1}\right)_{ij} {d}^\post_{j}(x). 
\end{equation}
However, this result cannot be extended in the quantum case directly because a quantum version of ${d}^\post$ is not uniquely defined due to non-commutativity.

\subsection{Quantum Bayesian estimation}
We aim at deriving lower bounds in another direction. Before that, we need to define the problem in quantum Bayesian setting. Let $\Hil$ be a finite dimensional Hilbert space and $\left\{ S_\theta~|~ \theta \in \Theta \right\}$ be an $n$-parameter quantum model with $\theta=(\theta_1, \theta_2, \cdots, \theta_n)$. In the following, we consider $S_\theta$ is full rank for all $\theta \in \Theta$ and the map $\theta \rightarrow S_\theta$ is one-to-one. Let a set of positive semidefinite matrices $\{\Pi_x\}$ be a measurement with the measurement outcome $x\in\cX$. The quantum measurement is normally referred to as a positive operator-valued measure (POVM), which is defined by
\begin{equation}
    \Pi=\{ \Pi_x \},~ \forall x \in \cX,~ \Pi_x \geq 0,~ \int_\cX dx \Pi_x = I,
\end{equation}
where $I$ is the identity operator on $\Hil$. 

Measurement outcome is described by a random variable $X$ that obeys the conditional probability distribution: 
\[
p_\theta(x) = \tr \left[ S_\theta \Pi_x \right],
\] 
where $\tr[ \cdot]$ denotes the trace on the Hilbert space $\Hil$. 
The performance of the estimator in this study is quantified by 
%a weighted-squared loss function,
%\begin{equation}
%    L(\theta, \hattheta)= \sum_{i,j} (\hattheta_i -\theta_i) W_{ij}(\theta) (\hattheta_j-\theta_j).
%\end{equation}
the quantum Bayes risk:
\begin{equation}\label{def:Brisk2}
  R_\mathrm{B} [\Pi,\htheta]= \int d\theta\,\prior \Tr\left[ W(\theta) V_\theta[\Pi,\htheta]\right].
\end{equation}
The main objective is to find the best pair $(\Pi, \hattheta)$ that minimizes the Bayes risk.
%There was a known tightest lower bound for this quantity.

\subsection{Bayesian Nagaoka-Hayashi bound}
In a recent work, we have proposed a new class of Bayesian lower bound, called the Bayesian Nagaoka-Hayashi bound \cite{suzuki2024bayesian}. 
Before we recall this bound, we obtain the alternative form of the Bayes risk. 
We first introduce a new set of variables for quantum measurement and estimator by
\begin{align*}
\mathbb{L}_{jk}[\Pi,\hat{\theta}]&=\int dx \hat{\theta}_{j}(x)\Pi_x\hat{\theta}_{k}(x)\quad(j,k=1,2,\dots,n),\\
X_j[\Pi,\hat{\theta}]&=\int dx\hat{\theta}_{j}(x)\Pi_x \quad(j=1,2,\ldots,n).
\end{align*}
In the following, we will omit the argument $[\Pi,\htheta]$ when it is clear. 
The key idea is to regard the above quantities as the operator-valued matrix and vector \cite{hayashi99,conlon2021efficient}. 
We introduce the extended Hilbert space by $\bbH=\bbc^n\otimes\cH$, and then $\bbL$ is identified as a matrix on $\bbH$, whose block matrix representation is given by $[\mathbb{L}_{jk}]$. 
 
Next, define the following quantities defined by the model and prior distribution. 
\begin{align*}
    \bbS(\theta)&:=[\bbS_{ij}(\theta)]\ \mbox{with}\ 
    \bbS_{ij}(\theta):= W_{ij}(\theta) S_\theta ,\\
    \Mtheta&:=[D_i(\theta)]\ \mbox{with}\ 
    D_i(\theta):=\sum_{j=1}^n W_{ij}(\theta)\theta_j S_\theta. 
%    X(\Pi,\hattheta)&:=[X_j(\Pi, \hattheta)]\ \mbox{with}\ X_j(\Pi,\hattheta) :=\int dx \hattheta_j(x) \Pi_x.\\
%    w(\theta)&:=\sum_{i,j}\theta_j W_{ij}(\theta)\theta_j.
\end{align*}
Then the next lemma is immediate. 
\begin{lemma}[Suzuki \cite{suzuki2024bayesian} Lemma 2]\label{lem:Briskrep}
The Bayes risk is expressed as
\begin{align}\nonumber
 R_\mathrm{B}[\Pi,\htheta]&=\Ttr{\left[\ol{\Sbb} \bbL\right]}-\Ttr{\left[\Mbar X^{T_1}\right]}-\Ttr{\left[X \Mbar^{T_1}\right]}+ \ol{w} ,\\
\mathrm{where\ }\bbSbar&:=\int \,d\theta\,\prior\bbS(\theta),\label{def:bayesS}\\
\Mbar&:=\int \,d\theta\,\prior\Mtheta .\label{def:bayesT}
\end{align}
\end{lemma}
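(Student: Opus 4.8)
The plan is to insert the quantum expression for the expectation into the mean-squared-error matrix and then expand the quadratic, matching each resulting term against the right-hand side of the claimed identity. Since the outcome $X$ obeys $p_\theta(x)=\tr[S_\theta\Pi_x]$, the MSE entries read
\begin{equation*}
V_{\theta,jk}[\Pi,\htheta]=\int dx\,(\hattheta_j(x)-\theta_j)(\hattheta_k(x)-\theta_k)\,\tr[S_\theta\Pi_x].
\end{equation*}
First I would expand this product into four pieces (a $\hattheta\hattheta$ term, two cross terms, and a $\theta\theta$ term), substitute into $R_\mathrm{B}=\int d\theta\,\prior\sum_{j,k}W_{jk}(\theta)V_{\theta,jk}$, and treat the four pieces separately.

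For the quadratic piece I would use that the estimates $\hattheta_j(x)$ are scalars to move them inside the trace, giving $\sum_{j,k}W_{jk}(\theta)\tr[S_\theta\int dx\,\hattheta_j\Pi_x\hattheta_k]=\sum_{j,k}W_{jk}(\theta)\tr[S_\theta\bbL_{jk}]$. Exchanging the order of the $x$- and $\theta$-integrations and recognizing $\bbSbar_{jk}=\int d\theta\,\prior W_{jk}(\theta)S_\theta$ then collapses this into the block trace $\Ttr[\bbSbar\bbL]$. The two cross terms are handled by the same scalar-pull-out step together with the identity $\sum_{j}W_{jk}(\theta)\theta_j S_\theta=D_k(\theta)$ (valid because $W$ is symmetric), which turns each cross term, after averaging, into $\sum_k\tr[\Mbar_k X_k]$; identifying these as $\Ttr[\Mbar X^{T_1}]$ and $\Ttr[X\Mbar^{T_1}]$ gives the two middle terms. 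Finally, for the $\theta\theta$ piece I would use the POVM normalization $\int dx\,\Pi_x=I$ and $\tr[S_\theta]=1$ to reduce it to the purely classical constant $\int d\theta\,\prior\sum_{j,k}\theta_j W_{jk}(\theta)\theta_k=\ol{w}$.

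The main bookkeeping obstacle is matching the scalar index sums to the block-matrix trace $\Ttr$ over $\bbH=\bbc^n\otimes\cH$ and correctly interpreting the partial transpose $T_1$ on the operator-valued vectors $X$ and $\Mbar$. Concretely, I would verify that for block matrices $\Ttr[AB]=\sum_{j,k}\tr[A_{jk}B_{kj}]$, and that the symmetries $\bbSbar_{jk}=\bbSbar_{kj}$ and $\bbL_{jk}=\bbL_{kj}$ (both inherited from the symmetry of $W$ and the commuting scalars $\hattheta$) let me replace $\bbL_{kj}$ by $\bbL_{jk}$ so that the block trace reproduces the index contraction obtained above. The same care is needed to check that $\Ttr[\Mbar X^{T_1}]=\sum_j\tr[\Mbar_j X_j]$, i.e.\ that the $T_1$ transpose simply pairs the $j$-th components of the two vectors; once these conventions are pinned down the identification of all four terms is immediate.
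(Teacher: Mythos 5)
Your proposal is correct and is exactly the direct expand-and-identify computation that the paper treats as ``immediate'' (it cites the lemma from Suzuki's earlier work without reproducing the proof): expand the weighted MSE under $p_\theta(x)=\tr[S_\theta\Pi_x]$, pull the scalar estimates into the trace to form $\bbL$ and $X$, use the symmetry of $W$ (and hence of $\bbSbar$ and $\bbL$) to match the block-trace contractions, and reduce the $\theta\theta$ term to $\ol{w}$ via POVM normalization. Your handling of the $\Ttr$ and $T_1$ conventions, including the observation that both cross terms coincide, fills in the bookkeeping the paper leaves implicit.
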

In this lemma, $\overline{\cdot}$ denotes the averaged operators with respect to the prior distribution. 
$X^{T_1}$ means the transpose over the parameter space and $\Ttr[\cdot] $ is the trace over both the Hilbert space and the parameter space.
%The last quantity in Eq.~\eqref{eq:Brisk5} is model independent quantity.

The fundamental theorem of Ref.~\cite{suzuki2024bayesian} is the lower bound for the Bayes risk $ R_\mathrm{B}[\Pi,\hattheta]$.
\begin{theorem}[Bayesian Nagaoka-Hayashi bound, Suzuki \cite{suzuki2024bayesian} Theorem 1]\label{thm:BNHbound}
For any POVM $\Pi$ and estimator $\htheta$, the following lower bound holds for the Bayes risk. 
  \begin{align*}
& R_\mathrm{B} [\Pi,\hat{\theta}]\ge \cNH \nonumber\\
  &\cNH:=\min_{\bbL,X}\left\{ \Ttr{\left[\bbSbar\bbL\right]}
  -\Ttr{\left[\Mbar X^{T_1}\right]}-\Ttr{\left[X \Mbar^{T_1}\right]}\right\}+ \sfwbar .
 % \label{def:BNHbound}
  \end{align*}
Here optimization is subject to the constraints:
  $\forall j,k,\bbL_{jk}=\bbL_{kj}$, $\bbL_{jk}$ is Hermitian, $X_j$ is Hermitian, and $\bbL\geq {X} X^{T_1}$.
\end{theorem}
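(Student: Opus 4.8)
The plan is to take Lemma~\ref{lem:Briskrep} as the starting point: it already rewrites the Bayes risk \emph{exactly} as the objective $\Ttr{\left[\bbSbar\bbL\right]}-\Ttr{\left[\Mbar X^{T_1}\right]}-\Ttr{\left[X\Mbar^{T_1}\right]}+\sfwbar$ evaluated at the particular pair $(\bbL,X)$ induced by the given POVM $\Pi$ and estimator $\htheta$. Since $\bbSbar$, $\Mbar$, and $\sfwbar$ are fixed data determined by the model, weight matrix, and prior, the theorem reduces to a relaxation statement: if the physical $(\bbL,X)$ always lies inside the feasible set cut out by the four listed constraints, then replacing the physical pair by the constrained minimizer can only lower the value, and the inequality $R_\mathrm{B}[\Pi,\htheta]\ge\cNH$ follows because the same reasoning applies to every admissible $(\Pi,\htheta)$. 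Thus essentially all the work is to verify feasibility of the physical $(\bbL,X)$.

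First I would dispatch the three algebraic constraints. Because the estimates $\htheta_j(x)$ are real scalars, they commute with the operators, so in $\bbL_{jk}=\int dx\,\htheta_j(x)\Pi_x\htheta_k(x)=\int dx\,\htheta_j(x)\htheta_k(x)\Pi_x$ the scalar prefactor is symmetric in $j,k$, giving $\bbL_{jk}=\bbL_{kj}$; and since each $\Pi_x\ge 0$ is Hermitian and is multiplied by a real scalar, both $\bbL_{jk}$ and $X_j=\int dx\,\htheta_j(x)\Pi_x$ are Hermitian. These are immediate.

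The substantive step, and the main obstacle, is the operator inequality $\bbL\ge X X^{T_1}$ on $\bbH=\bbc^n\otimes\cH$. I would prove it by testing against an arbitrary vector $\sum_j\ket{j}\otimes\ket{\psi_j}$, i.e.\ by showing $\sum_{j,k}\bra{\psi_j}(\bbL_{jk}-X_jX_k)\ket{\psi_k}\ge 0$ for all $\ket{\psi_1},\dots,\ket{\psi_n}\in\cH$. Factorizing each element as $\Pi_x=M_x^\dagger M_x$ and setting $v_x:=\sum_j\htheta_j(x)M_x\ket{\psi_j}$, the $\bbL$-term becomes $\int dx\,\|v_x\|^2$, while, using Hermiticity of the $X_j$, the $XX^{T_1}$-term collapses to $\|\,w\,\|^2$ with $w=\sum_k X_k\ket{\psi_k}=\int dx\,M_x^\dagger v_x$. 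The desired inequality is then precisely a Cauchy--Schwarz estimate whose normalization is supplied by the completeness relation $\int dx\,\Pi_x=I$: for any unit vector $\ket{u}$ one has $\bigl|\int dx\,\langle M_x u, v_x\rangle\bigr|\le\bigl(\int dx\,\bra{u}\Pi_x\ket{u}\bigr)^{1/2}\bigl(\int dx\,\|v_x\|^2\bigr)^{1/2}=\bigl(\int dx\,\|v_x\|^2\bigr)^{1/2}$, and taking the supremum over $\ket{u}$ bounds $\|w\|^2$ by $\int dx\,\|v_x\|^2$. This is the Nagaoka--Hayashi / operator Cauchy--Schwarz content of the bound, and it is the one place where completeness of the measurement is genuinely used.

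With feasibility established, I would close the argument by invoking Lemma~\ref{lem:Briskrep} once more: the $\cNH$ objective evaluated at the physical pair equals $R_\mathrm{B}[\Pi,\htheta]$, so the constrained minimum is no larger, giving the claim. I do not expect the relaxation step or the Hermiticity checks to cause trouble; the only point requiring care is the operator (rather than scalar) nature of the integral $\int dx\,M_x^\dagger v_x$, which is exactly why the supremum-over-$\ket{u}$ formulation of Cauchy--Schwarz is the convenient route.
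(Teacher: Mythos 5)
Your proof is correct, but a point of order first: this paper never proves Theorem~\ref{thm:BNHbound} at all --- it is recalled as a preliminary, citing Theorem~1 of Ref.~\cite{suzuki2024bayesian}, so there is no in-paper proof to compare against; the comparison below is with the standard derivation in that line of work (Hayashi, Conlon \emph{et al.}, Suzuki). Your overall structure matches it: Lemma~\ref{lem:Briskrep} writes $R_\mathrm{B}[\Pi,\htheta]$ exactly as the $\cNH$ objective evaluated at the physical pair $(\bbL,X)$ induced by $(\Pi,\htheta)$, so the theorem reduces to feasibility of that pair, after which constrained minimization can only decrease the value; your symmetry and Hermiticity checks are immediate and correct (the $\htheta_j(x)$ are real scalars multiplying the Hermitian $\Pi_x$), and your verification of the substantive constraint $\bbL\ge XX^{T_1}$ is valid: the quadratic form of $\bbL$ at $\sum_j\ket{j}\otimes\ket{\psi_j}$ is $\int dx\,\|v_x\|^2$, that of $XX^{T_1}$ is $\bigl\|\int dx\,M_x^\dagger v_x\bigr\|^2$ (Hermiticity of the $X_j$ is genuinely needed and correctly invoked here), and the supremum-over-unit-vectors Cauchy--Schwarz estimate, normalized by $\int dx\,\Pi_x=I$, closes the gap. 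The one place you deviate from the usual presentation is this last step: the standard argument instead notes that
\begin{equation*}
\begin{pmatrix}\bbL & X\\ X^{T_1} & I\end{pmatrix}
=\int dx\,\begin{pmatrix}\htheta(x)\\ 1\end{pmatrix}\begin{pmatrix}\htheta(x)^{T} & 1\end{pmatrix}\otimes\Pi_x\;\ge\;0,
\end{equation*}
where the identity in the lower-right block is exactly the completeness relation, and then takes the Schur complement with respect to that block to get $\bbL-XX^{T_1}\ge 0$ in one line. Your supremum/Cauchy--Schwarz computation is precisely this Schur-complement inequality unpacked by hand; completeness of the POVM enters at the same point in both, so nothing is gained or lost except brevity, and the block form makes it slightly more transparent when equality can hold. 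Either way, the relaxation step you describe finishes the proof.
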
	

\subsection{Bayesian Holevo-type bound}
In Ref.~\cite{suzuki2024bayesian}, a Bayesian version of the Holevo bound was also proposed. 
We recall that the Bayesian Holevo-type bound is lower than the Bayesian Nagaoka-Hayashi bound in general.
\begin{theorem}[Bayesian Holevo-type bound, Suzuki \cite{suzuki2024bayesian} Theorem 3]\label{thm:BHbound}
\begin{align*}
    \cNH& \geq \Ccal_\mathrm{BH},\\
    \Ccal_\mathrm{BH}&=\min_{X_j:\mathrm{\,Hermitian}} \{ \int d\theta \pi(\ta)\Tr \left[\Real Z_\theta(X)\right] \\
    + \int d\theta &\pi(\ta) \Tr|\Imag Z_\theta (X)| 
 - \Ttr\left[ \ol{D} X^{T_1}\right] -\Ttr{\left[X \Mbar^{T_1}\right]}\}+\ol{w},
\end{align*}
where 
\begin{equation*}
    Z_\theta(X) = \tr \left( \sqrt{W(\theta)} \otimes \sqrt{S_\theta} X X^{T_1} \sqrt{W(\theta)} \otimes \sqrt{S_\theta} \right).
\end{equation*}
Here $\Re(\Im)$ denote the element-wise real (imaginary) part of a matrix and 
$|A|=\sqrt{A^\dag A}$. In other words, $\Tr|A|$ is the absolute sum of all eigenvalues of $A$.
\end{theorem}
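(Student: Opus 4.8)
The plan is to start from the Bayesian Nagaoka--Hayashi bound (Theorem~\ref{thm:BNHbound}) and relax the semidefinite constraint $\bbL \geq XX^{T_1}$ into a pointwise-in-$\theta$ scalar condition, mirroring the logic that turns the Nagaoka--Hayashi bound into the Holevo bound in point estimation. The only term of $\cNH$ that depends on $\bbL$ is $\Ttr[\bbSbar\bbL]$, so it suffices to lower bound $\min_{\bbL}\Ttr[\bbSbar\bbL]$ for each fixed Hermitian $X$ and then minimize the resulting expression over $X$, since the remaining terms $-\Ttr[\Mbar X^{T_1}]-\Ttr[X\Mbar^{T_1}]+\ol{w}$ are $\bbL$-independent.

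First I would exploit the factorization $\bbS(\theta)=A_\theta^2$ with $A_\theta:=\sqrt{W(\theta)}\otimes\sqrt{S_\theta}$ Hermitian, and write $\Ttr[\bbSbar\bbL]=\int d\theta\,\pi(\theta)\,\Ttr[A_\theta\bbL A_\theta]$ by cyclicity. Introducing the partial trace $V_\theta:=\tr[A_\theta\bbL A_\theta]$ over $\cH$, this becomes $\int d\theta\,\pi(\theta)\,\Tr[V_\theta]$. Two facts about $V_\theta$ drive the argument. First, conjugating $\bbL\geq XX^{T_1}$ by the Hermitian operator $A_\theta$ gives $A_\theta\bbL A_\theta\geq A_\theta XX^{T_1}A_\theta$, and since the partial trace is positive, one obtains $V_\theta\geq Z_\theta(X)$ as $n\times n$ Hermitian matrices, where $Z_\theta(X)=\tr[A_\theta XX^{T_1}A_\theta]$ is exactly the matrix in the statement. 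Second, using that each block satisfies $\bbL_{jk}=\bbL_{kj}$ and is Hermitian, a short computation gives $(V_\theta)_{jk}=(\sqrt{W(\theta)}\,M\,\sqrt{W(\theta)})_{jk}$ with $M_{lm}:=\tr[S_\theta\bbL_{lm}]$ real and symmetric, so that $V_\theta$ is a \emph{real symmetric} matrix.

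The crux is then a Holevo-type scalar lemma: for an $n\times n$ Hermitian matrix $Z=\Real Z+i\,\Imag Z$, the minimum of $\Tr V$ over real symmetric $V$ with $V\geq Z$ equals $\Tr[\Real Z]+\Tr|\Imag Z|$. Granting this, $\Tr[V_\theta]\geq \Tr[\Real Z_\theta(X)]+\Tr|\Imag Z_\theta(X)|$ holds pointwise in $\theta$; integrating against $\pi(\theta)$ preserves the inequality, and minimizing over $\bbL$ and then over Hermitian $X$ yields precisely $\cNH\geq\Ccal_\mathrm{BH}$.

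I expect the main obstacle to be establishing the Holevo-type lemma and confirming that the relaxation is legitimate rather than degenerate. The lemma rests on bringing the real antisymmetric matrix $\Imag Z$ to a block form of $2\times2$ rotation blocks and choosing the optimal real symmetric completion block by block; care is needed because $V$ must stay real symmetric while $V-Z$ remains positive semidefinite as a complex Hermitian matrix. A secondary point is that the pointwise lower bound need not be attained by a single global $\bbL$ simultaneously for all $\theta$, but this only weakens the estimate in the correct direction, so the chain of inequalities still produces a valid lower bound on $\cNH$.
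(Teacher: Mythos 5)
Your proof is correct. A point of context: this paper does not prove Theorem~\ref{thm:BHbound} itself---it is quoted from Ref.~\cite{suzuki2024bayesian}---so the fair comparison is with the paper's proof of its own analogue, Theorem~\ref{thm:BH2}. Your argument uses the same two-step machinery employed there: relax the Nagaoka--Hayashi constraint $\bbL\geq XX^{T_1}$ of Theorem~\ref{thm:BNHbound} through a positive map whose output is guaranteed to be real symmetric, then eliminate the resulting matrix variable via Holevo's Lemma 6.6.1, $\min\{\Tr V : V \text{ real symmetric},\, V\geq Z\}=\Tr[\Real Z]+\Tr|\Imag Z|$. The difference is where the map acts. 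You conjugate by $A_\theta=\sqrt{W(\theta)}\otimes\sqrt{S_\theta}$ and partial-trace \emph{pointwise in} $\theta$, obtaining one real symmetric matrix $V_\theta\geq Z_\theta(X)$ per parameter value and then integrating against $\pi$; this is exactly what the $\theta$-dependent integrand $\int d\theta\,\pi(\theta)\,(\Tr[\Real Z_\theta(X)]+\Tr|\Imag Z_\theta(X)|)$ in $\Ccal_\mathrm{BH}$ requires, and it is presumably how the cited reference proceeds. The paper's Theorem~\ref{thm:BH2} instead applies a single map $\E^{(\lambda)}$ built from the averaged state $\bbSbar$, so the lemma is invoked only once; by the triangle inequality for the trace norm this averaging can only lose tightness relative to the pointwise treatment, but it buys computability, which is precisely the paper's stated motivation for Theorem~\ref{thm:BH2}. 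All the steps you flag as delicate do check out: congruence by the Hermitian $A_\theta$ preserves the semidefinite order, the partial trace is a positive map, $V_\theta=\sqrt{W(\theta)}\,M_\theta\,\sqrt{W(\theta)}$ is real symmetric because $(M_\theta)_{lm}=\tr[S_\theta\bbL_{lm}]$ is real and symmetric (using $\bbL_{lm}=\bbL_{ml}$, Hermiticity, and realness of $W(\theta)$), and the fact that a single $\bbL$ cannot in general saturate all the pointwise bounds simultaneously only makes the final inequality $\cNH\geq\Ccal_\mathrm{BH}$ possibly strict, which is all the theorem claims (note also that $\ol{D}$ and $\Mbar$ in the statement denote the same object, so your $-\Ttr[\Mbar X^{T_1}]-\Ttr[X\Mbar^{T_1}]$ terms match).
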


This lower bound is only expressed in terms of minimization over the set of Hermitian matrices $X=(X_1,X_2,\ldots,X_n)$. 
However, it is still hard to analyze it further, since the second term is expressed as an integration of functions containing variable $X$. 

\section{Results}
In this section, we shall propose a method of deriving a family of Holevo-type lower bounds based on a new technique. 
This is to apply an appropriate map that satisfies a certain conditions. 
Before we present our result, we first examine the direct analogue of the optimal classical bound explained in Sec.~\ref{seq:optclassical}. 

\subsection{Direct generalization of the Bayesian classically optimum bound}
Firstly, we consider to generalize the Bayesian classically optimum bound in quantum state with $\ta$-dependent weight matrix $W(\ta)$. 
%We use the Holevo's lemma stated below.
%\begin{lemma}[The Holevo's lemma] \cite{hayashi99} \label{lem:holevo}
%For all POVMs and estimators, $\mathbb{L}$ and $X$ obey the matrix inequality:
%\begin{equation*}
%\mathbb{L}[\Pi,\hat{\theta}]\ge X[\Pi,\hat{\theta}] \left(X[\Pi,\hat{\theta}]\right)^{T_1} .
%\end{equation*}
%\end{lemma}
This is done by apply the constraint $\bbL\geq {X} X^{T_1}$ in the definition of the Bayesian Nagaoka-Hayashi bound. 
%Directly using the Holevo's lemma we derive the following bound.
\begin{theorem}[Quantum Bayesian lower bound 1]\label{thm:QBbound_direct}
    \begin{equation}
    R_\sfB \geq \Ccal_\mathrm{direct} = - \braket{\ol{D}}{\ol{\Sbb}^{-1} \ol{D}}_{\ol{\Sbb}} +\ol{w}, 
\end{equation}
where the inner product is defined for operator-valued vector $X=[X_j]$ and $Y=[Y_j]$ as 
\begin{equation}
    \inner{X}{Y}_{\ol{\Sbb}}:= \tr\left[X^{T_1} \ol{\Sbb} Y\right]. 
%    \tr (\ol{\Sbb} Y X^{T_1} ).
\end{equation}
\end{theorem}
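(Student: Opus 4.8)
The plan is to obtain the bound from the Bayesian Nagaoka--Hayashi bound (Theorem~\ref{thm:BNHbound}) by using only its positivity constraint $\bbL\ge XX^{T_1}$ to eliminate the matrix variable $\bbL$, thereby collapsing $\cNH$ to a quadratic program in the Hermitian operator-vector $X$ alone. Since $\ol{\Sbb}\ge 0$ — in fact $\ol{\Sbb}>0$, because each block average is $\int d\theta\,\pi(\theta)\,W(\theta)\otimes S_\theta$ with $W(\theta)>0$ and $S_\theta$ full rank — and since $XX^{T_1}\ge 0$ for Hermitian $X$, the constraint forces $\Ttr[\ol{\Sbb}\bbL]\ge\Ttr[\ol{\Sbb}XX^{T_1}]=\inner{X}{X}_{\ol{\Sbb}}$. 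Using $\Ttr[\Mbar X^{T_1}]=\Ttr[X\Mbar^{T_1}]=\sum_i\tr[\ol{D}_iX_i]=:L(X)$, this reduces the fundamental inequality to $R_\sfB\ge\cNH\ge\min_{X\,\mathrm{Herm}}\{\inner{X}{X}_{\ol{\Sbb}}-2L(X)\}+\ol{w}$, and it remains to complete the square on this scalar quadratic.

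First I would introduce $Y:=\ol{\Sbb}^{-1}\Mbar$, the unique solution of $\ol{\Sbb}Y=\Mbar$, which is well defined once invertibility of $\ol{\Sbb}$ is in hand. The decisive difficulty is that $Y$ is \emph{not} Hermitian: because the partial transpose $T_1$ acts only on the parameter indices and never reverses the order of the Hilbert-space factors, $\ol{\Sbb}^{-1}\Mbar$ has no reason to be self-adjoint, and consequently $\inner{\cdot}{\cdot}_{\ol{\Sbb}}$ fails to be a nonnegative form on non-Hermitian arguments, so the textbook completion $\inner{X}{X}_{\ol{\Sbb}}-2L(X)=\inner{X-Y}{X-Y}_{\ol{\Sbb}}-\inner{Y}{Y}_{\ol{\Sbb}}$ is unavailable. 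The route around this that I would pursue is to carry $Y$ together with its adjoint $Y^\dag$. From $\ol{\Sbb}Y=\Mbar$ one gets $\inner{X}{Y}_{\ol{\Sbb}}=\tr[X^{T_1}\Mbar]=L(X)$, while taking the adjoint of this relation and invoking the symmetry $\ol{\Sbb}^{T_1}=\ol{\Sbb}$ (inherited from $W_{ij}=W_{ji}$) yields the companion identity $\inner{Y^\dag}{X}_{\ol{\Sbb}}=L(X)$ for every Hermitian $X$.

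With both one-sided identities in place, expanding the \emph{mixed} square by bilinearity gives $\inner{X-Y^\dag}{X-Y}_{\ol{\Sbb}}=\inner{X}{X}_{\ol{\Sbb}}-2L(X)+\inner{Y^\dag}{Y}_{\ol{\Sbb}}$, where $\inner{Y^\dag}{Y}_{\ol{\Sbb}}$ reproduces exactly the scalar $\braket{\Mbar}{\ol{\Sbb}^{-1}\Mbar}_{\ol{\Sbb}}$ appearing in the statement. The final step — the point I expect to need the most care — is to show this mixed square is nonnegative. Because $X$ is Hermitian, $X-Y^\dag=(X-Y)^\dag$, so setting $Z=X-Y$ the quantity equals $\sum_{ij}\tr[Z_i^\dag\,\ol{\Sbb}_{ij}\,Z_j]$, which is manifestly $\ge 0$ for $\ol{\Sbb}\ge 0$, for example by writing it as $\Ttr[\ol{\Sbb}\,\Xi]$ with the positive block operator $\Xi_{ji}=Z_jZ_i^\dag$. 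Hence $\inner{X}{X}_{\ol{\Sbb}}-2L(X)\ge-\braket{\Mbar}{\ol{\Sbb}^{-1}\Mbar}_{\ol{\Sbb}}$ for all Hermitian $X$; adding $\ol{w}$ and chaining with $R_\sfB\ge\cNH$ delivers $R_\sfB\ge-\braket{\Mbar}{\ol{\Sbb}^{-1}\Mbar}_{\ol{\Sbb}}+\ol{w}=\Ccal_\mathrm{direct}$, as claimed. The resulting bound is valid but generally loose, precisely because the genuine Hermitian minimizer differs from the non-self-adjoint $\ol{\Sbb}^{-1}\Mbar$.
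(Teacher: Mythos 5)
Your proof is correct, and it reaches the bound by the same overall route as the paper---relax the Bayesian Nagaoka--Hayashi bound using only the constraint $\bbL\ge XX^{T_1}$, then minimize the resulting quadratic in the Hermitian vector $X$ around the candidate $Y=\ol{\Sbb}^{-1}\ol{D}$---but your completion of the square differs at the decisive step, and yours is the version that actually closes the argument. The paper completes the square bilinearly: it rewrites the linear term $L(X):=\Ttr[\ol{D}X^{T_1}]$ as $\braket{\ol{\Sbb}^{-1}\ol{D}}{X}_{\ol{\Sbb}}$ and then implicitly invokes $\braket{X-Y}{X-Y}_{\ol{\Sbb}}\ge 0$. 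Read literally, both moves break down for non-Hermitian $Y$, exactly the difficulty you identify: the bilinear form is not nonnegative off the Hermitian subspace (e.g.\ $\inner{iA}{iA}_{\ol{\Sbb}}=-\inner{A}{A}_{\ol{\Sbb}}$ for Hermitian $A$), and the cross-term identity $\braket{Y}{X}_{\ol{\Sbb}}=L(X)$ would require $\sum_i Y_i\ol{\Sbb}_{ij}=\ol{D}_j$, whereas $\ol{\Sbb}Y=\ol{D}$ only yields the daggered relation $\sum_i Y_i^{\dagger}\ol{\Sbb}_{ij}=\ol{D}_j$. Your mixed square $\inner{X-Y^{\dagger}}{X-Y}_{\ol{\Sbb}}=\sum_{ij}\tr[Z_i^{\dagger}\ol{\Sbb}_{ij}Z_j]\ge 0$ (with $Z=X-Y$, $X$ Hermitian) repairs both defects at once: the two one-sided identities $\inner{X}{Y}_{\ol{\Sbb}}=L(X)=\inner{Y^{\dagger}}{X}_{\ol{\Sbb}}$ produce exactly the cross term $-2L(X)$, and the constant $\inner{Y^{\dagger}}{Y}_{\ol{\Sbb}}=\tr[\ol{D}^{T_1}\ol{\Sbb}^{-1}\ol{D}]$ is the scalar the theorem intends (the theorem's literal notation $\braket{\ol{D}}{\ol{\Sbb}^{-1}\ol{D}}_{\ol{\Sbb}}$ would collapse to $\tr[\ol{D}^{T_1}\ol{D}]$, an evident misprint that your reading silently corrects). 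What the paper's presentation buys is brevity and a transparent parallel with the classical derivation of Sec.~\ref{seq:optclassical}; what yours buys is a genuine positivity argument valid in the non-commutative setting, plus a precise account of why the bound cannot be tight---the unconstrained minimizer $\ol{\Sbb}^{-1}\ol{D}$ leaves the Hermitian feasible set---which the paper only asserts in the remark following its proof.
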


\begin{proof}
This can be proved by completing the square as in the classical case after the use of $\bbL\geq {X} X^{T_1}$.
\begin{equation} \label{proof:direct}
    \begin{split}
        R_\sfB & \geq \Ttr[ \overline{\Sbb} X X^{T_1}]
        - 2  \Ttr \left( \overline{D} X^{T_1} \right)  +\ol{w} \\
        &= \tr \left( X^{T_1} \overline{\Sbb} X  \right)
        - 2 \tr \left[ X^{T_1} \overline{D}   \right] + \ol{w} \\
        &= \braket{X}{ X}_{\ol{\Sbb}} -2 \braket{\ol{\Sbb}^{-1} \ol{D}}{ X }_{\ol{\Sbb}}+\ol{w} \\
        &\geq - \braket{\ol{D}}{\ol{\Sbb}^{-1} \ol{D}}_{\ol{\Sbb}} +\ol{w},
    \end{split}
\end{equation}
where the optimizer for the inequality is given by 
\[
X_{\mathrm{opt},i}=\sum_{j=1}^n ({\ol{\Sbb}}^{-1})_{ij} \overline{D_j}. 
\]
\end{proof}

The proof of this bound is straightforward. However,
this cannot be tight because $X_j, ({\ol{\Sbb}}^{-1})_{ij}$ ($i,j$th element of $(\ol{\Sbb})^{-1}$), and $\overline{D_i}$ are Hermitian, but $\ol{\Sbb}^{-1}_{ij}$ and $\overline{D_i}$ are not commute. This means that optimal $X_{\mathrm{opt}}$ cannot be Hermitian. 
This bound is the direct generalization of the Bayesian classical one. If we set the state as classical we can immediately degenerate in the previous bound.

\subsection{New quantum Bayesian lower bounds}
Next, we derive a new one-parameter family of lower bounds which is tighter than the direct generalization.
The key idea in this proceedings is to define a map $\E$ from $\Li(\C^n \ot \Hil)$ to $\Li(\C^n)$ such 
that satisfies the condition. 
\begin{definition}
We say that $\E$ is real symmetric, if $\E(\bbL)\in \bbc^{n\times n}$ is real symmetric for all $\bbL\in\Li_\mathrm{sym}(\C^n \ot \Hil)$. 
\end{definition}
Here $ \Li_\mathrm{sym}(\C^n \ot \Hil)$ denotes the set of all Hermitian $\bbL$ which is invariant under the partial transposition $\cdot^{T_1}$. Using the block-matrix representation, this is equivalent to $\forall i,j,\,\bbL_{ij}=\bbL_{ji}$. 

Given a positive operator $\bbS$ on $\bbH=\C^n \ot \Hil$, we define one-parameter family of maps for $\Xds \in \Li(\C^n \ot \Hil)$.
\begin{equation}
    \E^{(\la)}(\Xds) := \frac{1+\lambda}{2} \E^{(+)}(\Xds) + \frac{1-\lambda}{2} \E^{(-)}(\Xds),
\end{equation}
where
\begin{align}
    \E^{(+)}(\Xds) &:= \tr \left[ (I\ot \sq{S}) \Xds (I\ot \sq{S}) \right]\\
    \E^{(-)}(\Xds) &:= \tr \left[ (I\ot \sq{S}) \Xds^{T_1} (I\ot \sq{S}) \right]
\end{align}

We can show the next property. 
\begin{lemma}\label{lem:lambda-map}
A map $\E^{(\la)}$ is real symmetric and positive for $\lambda\in[-1,1]$. 
\end{lemma}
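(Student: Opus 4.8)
The plan is to pass to the block-matrix representation $\Xds=[\Xds_{ij}]$ with $\Xds_{ij}\in\Li(\Hil)$ and to reduce each of the two constituent maps to an explicit $n\times n$ matrix. Using cyclicity of the partial trace over $\Hil$, the $(i,j)$ entry of $\E^{(+)}(\Xds)$ is $\tr[\sq{S}\,\Xds_{ij}\,\sq{S}]=\tr[S\Xds_{ij}]$, whereas partial transposition only swaps the blocks, so the $(i,j)$ entry of $\E^{(-)}(\Xds)$ is $\tr[S\Xds_{ji}]$. Writing $A:=\E^{(+)}(\Xds)$, this yields the key identity $\E^{(-)}(\Xds)=A^{T}$, which I expect to carry most of the weight: once it is in hand, both claimed properties reduce to elementary facts about the single matrix $A$.

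For the real-symmetric property I would start from $\bbL\in\Li_\mathrm{sym}(\C^n\ot\Hil)$. Hermiticity gives $\bbL_{ij}^\dag=\bbL_{ji}$, and the defining symmetry $\bbL_{ij}=\bbL_{ji}$ then forces each block $\bbL_{ij}$ to be Hermitian. Hence $A_{ij}=\tr[S\bbL_{ij}]$ is real, since the trace of a product of two Hermitian operators is real, and $A_{ij}=A_{ji}$ by block symmetry; thus $A$ is real symmetric and $\E^{(-)}(\bbL)=A^{T}=A$, so $\E^{(\la)}(\bbL)=A$ is real symmetric for every $\lambda$.

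For positivity---meaning that $\E^{(\la)}$ sends positive semidefinite operators to positive semidefinite matrices---I would take $\Xds\ge 0$. Then $(I\ot\sq{S})\Xds(I\ot\sq{S})\ge 0$ because $I\ot\sq{S}$ is Hermitian, and the partial trace over $\Hil$ preserves positivity, so $\E^{(+)}(\Xds)=A\ge 0$. I expect the $\E^{(-)}$ term to be the main obstacle: it is tempting to write $\E^{(-)}(\Xds)=\E^{(+)}(\Xds^{T_1})$ and argue blockwise, but this fails because the partial transpose of a positive operator need not be positive. The way around is to use the identity $\E^{(-)}(\Xds)=A^{T}$ instead, together with the fact that the transpose of a positive semidefinite matrix is again positive semidefinite, giving $\E^{(-)}(\Xds)\ge 0$. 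Finally, for $\lambda\in[-1,1]$ both weights $\frac{1+\lambda}{2}$ and $\frac{1-\lambda}{2}$ are nonnegative, so $\E^{(\la)}(\Xds)$ is a nonnegative combination of the positive semidefinite matrices $A$ and $A^{T}$ and is therefore positive semidefinite. The hypothesis $\lambda\in[-1,1]$ enters exactly here---to keep both mixing weights nonnegative---and one should expect positivity to fail once $|\lambda|>1$.
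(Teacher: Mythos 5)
Your proof is correct, and it actually covers more of the lemma than the paper's own argument, while taking a genuinely different route. The paper works with a tensor-product decomposition $\Xds=\sum_i \alpha_i A_i\ot B_i$, argues that membership in $\Li_\mathrm{sym}(\C^n\ot\Hil)$ lets one take $A_i^T=A_i$, and concludes that $\E^{(+)}(\Xds)=\sum_i\alpha_i A_i\,\tr[S B_i]$ is real symmetric; it stops there, leaving the $\E^{(-)}$ term, positivity, and the role of $\lambda\in[-1,1]$ unaddressed. Your blockwise computation replaces that decomposition (which is slightly delicate, since one must justify that a decomposition with symmetric factors exists) with the unambiguous entrywise formula $(\E^{(+)}(\Xds))_{ij}=\tr[S\,\Xds_{ij}]$, and your key identity $\E^{(-)}(\Xds)=(\E^{(+)}(\Xds))^{T}$ then carries both remaining claims: symmetry of $\E^{(\la)}$ on $\Li_\mathrm{sym}$ becomes immediate, and positivity of the $\E^{(-)}$ term follows from the fact that the transpose of a positive semidefinite matrix is positive semidefinite. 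You are right to flag that the tempting alternative, arguing via $\Xds^{T_1}\ge 0$, would fail because partial transposition does not preserve positivity; that is exactly the trap your identity avoids. Your closing step (contraction by the Hermitian operator $I\ot\sqrt{S}$, positivity of the partial trace, and nonnegativity of the weights $\frac{1\pm\lambda}{2}$ precisely when $|\lambda|\le 1$) supplies the positivity statement that the paper asserts but never proves, and it makes visible why the hypothesis $\lambda\in[-1,1]$ is the natural boundary. In short: the paper's proof is a terse sketch of the symmetry half; yours is a complete and self-contained proof of both halves.
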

\begin{proof}
%By this specialized design we can prove that $\E^\lambda(\Lds)$ is real symmetric for $\ol{\Sbb},\Lds \in \Li_\mathrm{sym} (\C^n \otimes \Hil)$. 
To show $\E^{(+)}$ is real symmetry we decompose $\Xds=\sum_i \alpha_i A_i \ot B_i$. Then we obtain the form of $\E^{(+)} (\Xds)$ in 
    \begin{align}
        \E^{(+)} (\Xds) = \sum_i \alpha_i A_i \tr S B_i \ \mathrm{and} \  
        \E^{(+)} (\Xds)^T = \sum_i \alpha_i A_i^T \tr S B_i.
    \end{align}
    We notice that $\Xds \in \Li_{sym,h}(\Hil)$, this means $A_i^T = A_i$. Thus $\E^{(+)} (\Xds)=\E^{(+)} (\Xds)^T$. 
    
    On the other hand, $\E^{(+)} (\Xds)$ is a real matrix. Since $A$ is positive and $A^T =A$, we obtain $A \in \R^{n \times n} $. By $B$ is hermitian, we get $\tr SB \in \R$. Thus $\E^{(+)} (\Xds)\in \R^{n\times n}$. 
\end{proof}

After identifying a real symmetric map, we can derive a family of lower bounds, which has a similar structure as the Holevo bound in point estimation.  
\begin{theorem}[Quantum Bayesian lower bound 2]\label{thm:BH2}
\begin{equation} \label{bound:BH2}
\begin{split}
    \Ccal_\mathrm{BNH} \geq \Ccal_\mathrm{BH}^{(\la)}:= \min_X \{ \Tr \left[ W \Real Z_B^{(\la)} (X)\right] \\
    + \Tr | \sqrt{W} \Im Z_B^{(\la)}(X) \sqrt{W} | \\ - \Ttr\left[ W \ol{D} X^{T_1}\right] -\Ttr{\left[ W X \Mbar^{T_1}\right]}+ \Tr[W M] \},
\end{split}
\end{equation}
where
\begin{align}
    Z_B^{(\la)}(X)&:= \left[ \frac{1+\la}{2} \tr(X_i S_B X_j) +\frac{1-\la}{2} \tr (X_j S_B X_i) \right],\\
    &= \left[ \inner{X_i}{X_j}^{(\la)}_{S_B} \right]
\end{align}
is an $n\times n$ complex positive semidefinite matrix, 
and minimization is subject to $X_j$: Hermitian. 
Furthermore, the right hand side the bound \eqref{bound:BH2} takes the maximum value when $\lambda=\pm1$.
\[
\max_{\lambda\in[-1,1]}=\Ccal_\mathrm{BH}^{(\la= \pm 1)} .
\]
\end{theorem}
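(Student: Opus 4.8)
The plan is to reduce the inner minimization over $\bbL$ in the Bayesian Nagaoka--Hayashi bound to a quantity depending only on $X$, by specializing the map of Lemma~\ref{lem:lambda-map} to the prior-averaged state $S_B=\int d\theta\,\prior S_\theta$. Since the terms $-\Ttr[\Mbar X^{T_1}]-\Ttr[X\Mbar^{T_1}]$ and $\sfwbar$ are independent of $\bbL$, for each fixed Hermitian $X$ it suffices to bound $\min_{\bbL}\Ttr[\bbSbar\bbL]$ over all $\bbL$ that are Hermitian, satisfy $\bbL_{ij}=\bbL_{ji}$, and obey $\bbL\ge XX^{T_1}$. In the parameter-independent weight case $\bbSbar=W\ot S_B$ factorizes, so a direct block computation gives $\Ttr[\bbSbar\bbL]=\Tr[W\E^{(+)}(\bbL)]$, and by Lemma~\ref{lem:lambda-map} the matrix $R:=\E^{(+)}(\bbL)$ is real symmetric for such $\bbL$. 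Applying positivity of $\E^{(\la)}$ to $\bbL-XX^{T_1}\ge0$ then yields $R=\E^{(\la)}(\bbL)\ge\E^{(\la)}(XX^{T_1})=Z_B^{(\la)}(X)$ (up to a transposition of the parameter indices, which does not affect the resulting bound). Thus I obtain a real symmetric $R$ dominating the Hermitian matrix $Z_B^{(\la)}(X)$, with $\Ttr[\bbSbar\bbL]=\Tr[WR]$.

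The crux is a Holevo-type reduction lemma: for every real symmetric $R$ and Hermitian $Z$ with $R\ge Z$, and every $W>0$, one has $\Tr[WR]\ge\Tr[W\Real Z]+\Tr|\sqrt{W}\,\Imag Z\,\sqrt{W}|$. I would prove this by decomposing $Z=\Real Z+\I\Imag Z$; conjugating the positive semidefinite matrix $R-Z$ entrywise (legitimate because $R$ is real, and entrywise conjugation preserves positivity of a Hermitian matrix) shows $R\ge\Real Z$, so $C:=\sqrt{W}(R-\Real Z)\sqrt{W}\ge0$ satisfies $C\ge\I\sqrt{W}\,\Imag Z\,\sqrt{W}$. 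Rewriting this Hermitian constraint as a real $2n\times2n$ semidefinite condition and minimizing $\Tr C$ recovers the optimum $|\sqrt{W}\,\Imag Z\,\sqrt{W}|$; this is the classical Holevo argument and is where the essential work lies. Substituting $R=\E^{(+)}(\bbL)$ and restoring the $\bbL$-independent terms (which, for a parameter-independent weight, rearrange into $-\Ttr[W\ol{D}X^{T_1}]-\Ttr[WX\Mbar^{T_1}]+\Tr[WM]$) gives $\Ccal_\mathrm{BNH}\ge\Ccal_\mathrm{BH}^{(\la)}$ after taking the minimum over $X$.

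For the final assertion I would first isolate the $\la$-dependence of $Z_B^{(\la)}(X)$. Using $\overline{\tr(X_iS_BX_j)}=\tr(X_jS_BX_i)$, a one-line computation shows that $\Real Z_B^{(\la)}(X)$ is independent of $\la$, while $\Imag Z_B^{(\la)}(X)=\la\,\Imag Z_B^{(1)}(X)$ is linear in $\la$. Hence for fixed $X$ the objective equals $g(X)+|\la|\,h(X)$ with $h(X)=\Tr|\sqrt{W}\,\Imag Z_B^{(1)}(X)\sqrt{W}|\ge0$, which is nondecreasing in $|\la|$. The same identity gives $Z_B^{(-\la)}(X)=Z_B^{(\la)}(X)^{\intercal}$, so the real-part trace is unchanged and the imaginary part flips sign under $\la\mapsto-\la$; therefore $\Ccal_\mathrm{BH}^{(\la)}=\Ccal_\mathrm{BH}^{(-\la)}$, and in particular $\Ccal_\mathrm{BH}^{(+1)}=\Ccal_\mathrm{BH}^{(-1)}$. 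Finally, writing $f(X,\la)=g(X)+|\la|h(X)$ and letting $X^{\ast}$ minimize $f(\cdot,1)$, for $|\la|\le1$ I get $\Ccal_\mathrm{BH}^{(\la)}=\min_X f(X,\la)\le f(X^{\ast},\la)\le f(X^{\ast},1)=\Ccal_\mathrm{BH}^{(1)}$, so $\max_{\la\in[-1,1]}\Ccal_\mathrm{BH}^{(\la)}=\Ccal_\mathrm{BH}^{(\pm1)}$. I expect the Holevo reduction lemma to be the only substantial obstacle; the $\la$-optimization is then immediate from the explicit $|\la|$-dependence.
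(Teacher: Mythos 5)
Your proposal is correct and follows essentially the same route as the paper's proof: apply the real-symmetric positive map $\E^{(\la)}$ to the constraint $\bbL \ge X X^{T_1}$ to obtain a real symmetric matrix dominating $Z_B^{(\la)}(X)$, invoke the Holevo reduction (Lemma 6.6.1 of Holevo's book, which you reprove rather than cite), and conclude the $\la$-optimization from the facts that $\Real Z_B^{(\la)}(X)$ is $\la$-independent while $\Imag Z_B^{(\la)}(X)$ is linear in $\la$. If anything, your treatment is slightly more careful than the paper's at two points: you make explicit that the identification $\Ttr[\bbSbar\bbL]=\Tr[W\,\E^{(+)}(\bbL)]$ uses the factorization $\bbSbar=W\ot S_B$, and your argument via the minimizer $X^{\ast}$ at $\la=1$ justifies interchanging the minimization over $X$ with the monotonicity in $|\la|$, which the paper compresses into the remark $\Tr|\Imag Z^{(\la)}(X)|\propto|\la|$.
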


\begin{proof}
First, let us prove the inequality \eqref{bound:BH2}. 
From the original form of the Bayesian Nagaoka-Hayashi bound of Theorem \ref{thm:BNHbound}, 
we see that $\bbL,X$ satisfying the constraints also satisfy $\E^{(\la)}(\bbL-X X^{T_1})\ge0$. 
Define $V:=\E^{(\la)}(\bbL)=\Ttr{\left[\ol{\Sbb} \bbL\right]}$, then $V$ is real symmetric and hence we can relax the minimization problem in terms of 
$V$ and $X$: 
\begin{align}\label{eq:key_eq}
 \Ccal_\mathrm{BNH} &\geq \min_{V,X}\{\Tr[WV]- \Ttr\left[ W\ol{D} X^{T_1}\right] -\Ttr\left[  W X \ol{D}^{T_1}\right]\\& + \Tr[WM] \}, 
\end{align}
where the constraints are $V$: real symmetric, $V\ge Z^{(\la)}(X)$, and $X_j$: Hermitian. 
From the well-known lemma (Lemma 6.6.1 in Ref.~\cite{holevobook}), we 
have $\min_V\{\Tr[WV]\,|\,V\ge Z^{(\la)}(X)\}= \Tr \left[ W \Real Z^{(\la)} (X)\right] + \Tr |\sq{W} \Im Z^{(\la)}(X) \sq{W}|$. 
This proves the first part of the theorem.

Next, we show that this family of bounds takes the maximum value when $\lambda=\pm 1$, i.e.,
\begin{equation}
    \Ccal_\mathrm{BH}^{{(\la=\pm1)} } \geq \Ccal_\mathrm{BH}^{(\la)}~ \rm{for~all~}\la \in [-1,1]. 
\end{equation}
By separating the items, we see the real part of $Z^{(\la)}$ is independent of $\lambda$ as 
\begin{equation}\label{eq:RePartZ}
\begin{split}
    \Real Z^{(\la)}(X)&= \half Z^{(\la)} (X)+ \half (Z^{(\la)}(X)^T)\\
    &= \half Z^{(\la)} (X)+\half Z^{(-\la)} (X)\\
    &=\frac{1}{2} \tr\left[\left\{\ol{\bbS}\,,\, \half \left[ X X^{T_1}+(X X^{T_1})^{T_1}\right] \right\} \right], 
    \end{split}
\end{equation}
where the second line is due to $(\tr[\ol{\bbS}\mathbb{X}])^{T}=\tr[\mathbb{X}^{T_1}\ol{\bbS}]$ and symmetrized multiplication of $\bar{\bbS},\mathbb{X}$. 
The imaginary part is, on the other hand, proportional to $\la$, since 
\begin{equation}\label{eq:ImPartZ}
    \begin{split}
        \Imag Z^{(\la)}(X) &= \frac{1}{2 i} \left( Z^{(\la)}(X) - Z^{(\la)}(X)^T \right)\\
        &= \frac{1}{2 i} ( Z^{(\la)} (X)- Z^{(-\la)} (X) )\\
        &= \frac{\la}{2}\tr\left[\left\{\ol{\bbS}\,,\,\frac{i}{2} \big[ X X^{T_1}-(X X^{T_1})^{T_1} \big]\right\} \right],
    \end{split}
\end{equation}
where $\{X,Y\}:=XY+YX$ is the anti-commutator. 
%where
%\begin{equation}
%    Z_B(X) = \frac{1}{2} \tr \{ \ol{\Sbb} , X X^{T_1} \} \left(= \frac{1}{2} \tr \{ \ol{\Sbb} , (X X^{T_1})^{T_1} \}^T \right).
%\end{equation}
Due to it, we can see the trace norm of imaginary part is proportional to absolute value of $\la$. In other words, $\Tr|\Imag Z^{(\la)}(X)| \propto | \la|$. This means that the one-parameter family of the quantum Bayesian lower bounds takes the maximum value at $\la=\pm 1$.
%On the other hand, we notice that the Bayesian Nagaoka-Hayashi bound is tighter than this bound.
%\begin{equation}
%\begin{split}
%    \Rightarrow \Ccal_\mathrm{BNH} \geq \Ccal_\mathrm{BH}^{\la= \pm 1} = \min_X \{ \Tr \Real Z_B (X) + \Tr | Z_B (X) |\\
%    - 2 \Ttr \ol{D} X^{T_1} + \ol{w} \}.
%\end{split}
%\end{equation}
\end{proof}

We now discuss the advantages of the proposed bound:
\begin{itemize}
    \item The two Bayesian Holevo-type bounds are similar; both are expressed as minimization over $X$ operators. However as an optimization problem, the new one is easier to calculate since the previous bound (Theorem \ref{thm:BHbound}) requires to compute the integral over a complicated function of $X$, namely $\int d\theta \pi(\ta)  \Tr|\Imag Z_\theta (X)|$. 
    In contrast, the new bound refers to the averaged state $\ol{\bbS}=\int d\theta W(\theta)\otimes S_\theta$ only. Thus, what we need is to calculate the averaged state in advance.
    \item 
    This new bound is tighter than the direct quantum version of the optimal Bayesian bound (Theorem \ref{thm:QBbound_direct}).
    \begin{equation}
        \Ccal_\mathrm{BH}^{(\la)} \geq  \Ccal_\mathrm{direct}.
    \end{equation}
    The reason is that we can show the real part of the bound \eqref{eq:RePartZ} is equal to the first term of function in the proof \eqref{proof:direct} and the imaginary part \eqref{eq:ImPartZ} is non-negative.
    \begin{multline*}
        \Tr [\Real Z^{(\la)}(X)]+\Tr | \Imag Z^{(\la)}(X)| \\
        \geq \Tr [\Real Z^{(\la)}(X)] = \tr [X^{T_1} \ol{\Sbb} X] .
    \end{multline*}
    \item Lastly, the family of our new bounds immediately gives one-parameter family generalization of the previously established Bayesian lower bounds when the weight matrix is parameter independent. This will be explored in the next subsection. 
\end{itemize}

\subsection{Bayesian $\lambda$LD bound in parameter-independent weight}
In this subsection, we take the $\ta$-independent weight matrix. The Bayesian MSE is defined as
\begin{equation}
    V_B[\Pi,\hattheta]:= \int d\ta \pi(\ta) V_\ta [\Pi,\hattheta].
\end{equation}
The goal here is to derive a family of quantum Bayesian lower bounds in a closed form by approximating 
the optimization over $X$ operators. 
The key ingredient in our approach is to utilize the logarithmic derivative (LD) type equation which 
was shown to be useful for point estimation \cite{yamagata2021maximum,suzuki2021non}. 
For convenience, we call this family as the Bayesian $\la$LD bound, even though this is different from the quantum Cram\'er-Rao  bound based on quantum LDs. 

%Rubio and Dunningham generalized the Personick bound \cite{personick71} in multi-parameter state which is a lower bound for the Bayesian MSE.
%\begin{theorem}[the Personick SLD bound] \cite{rubio2020bayesian}.
%    \begin{equation}
%        V_B[\Pi,\hattheta] \geq M- K
%    \end{equation}
%where the first term $M$ is an $n\times n$ real symmetric matrix whose $(j,k)$ component is defined by
%\begin{equation*}
%M_{jk}=\int d\theta\,\prior\theta_j\theta_k.
%\end{equation*}
%The second term is a contribution regarded as the quantum nature and can be interpreted as the Bayesian version of the SLD Fisher information matrix. 
%\begin{equation}
%    \ol{D}_i = \half \ol{S} L_i + \half L_i \ol{S},~ K_{ij}:= \tr \ol{D}_i L_j .
%\end{equation}
%\end{theorem}
We substitute parameter-independent weight matrix in the quantum Bayesian bound of Theorem \ref{thm:BH2} to get the following theorem. 
\begin{theorem}[Bayesian $\lambda$LD bound] \label{thm:BH3}
For parameter independent weight matrix, the Bayes risk is bounded by $\Ccal^{(\la)}_\mathrm{BLD} (W)$. 
\begin{equation*}%\label{blaldw}
    \Ccal^{(\la)}_\mathrm{BLD} (W) :=  -\Tr \left[W \Real K^{(\la)}\right] + \Tr | \sq{W} \Imag K^{(\la)} \sq{W} |+ \ol{w},
\end{equation*}
    for $\lambda\in[-1,1]$, where $K^{(\la)}$ is defined by
%\begin{equation} 
\begin{align*}
    K^{(\la)}_{jk}&:= \tr[\MBk L^{(\la)}_j], \\
    \MBj &= \plam \rhoB L^{(\la)}_j + \mlam L^{(\la)}_j \rhoB,\\
\rhoB&=\int \,d\theta\,\prior S_\theta,\\
\MBj&=\int \,d\theta\,\prior \theta_j S_\theta. 
\end{align*}
%\end{equation}
In addition, the Bayes risk is bounded by 
\begin{equation}
    R_{\mathrm{B}}[\Pi,\hattheta] \geq \max_\la 
    \Ccal^{(\la)}_\mathrm{BLD} (W)=: C^{\max}_\mathrm{BLD}.
\end{equation}
\end{theorem}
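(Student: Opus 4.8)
The plan is to specialize Theorem~\ref{thm:BH2} to a constant weight matrix and then eliminate the optimization over $X$ by a matrix relaxation built from the $\la$LD equation. For $\ta$-independent $W$ the averaged state factorizes as $\ol{\bbS}=W\ot\rhoB$, and the two linear terms of \eqref{bound:BH2} combine into $-2\Tr[W\,T(X)]$ with $T(X)_{jk}:=\tr(X_j\MBk)$. The one elementary but decisive observation I would record at the outset is that, since $X_j$ and $\MBk$ are both Hermitian, each entry $\tr(X_j\MBk)$ is \emph{real}, so $T(X)$ is a real matrix. Writing the inner optimization over $V$ via Lemma~6.6.1 of \cite{holevobook} in reverse, the bound reads
\begin{equation*}
\Ccal_\mathrm{BH}^{(\la)}=\min_{X,\,V}\big\{\Tr[W V]-2\Tr[W\,T(X)]\big\}+\ol{w},
\end{equation*}
the minimization running over Hermitian $X$ and real symmetric $V$ with $V\ge Z_B^{(\la)}(X)$.

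Next I would introduce the $\la$LD operators through $\MBj=\plam\rhoB L^{(\la)}_j+\mlam L^{(\la)}_j\rhoB$ and record the reproducing identity $\inner{A}{L^{(\la)}_k}^{(\la)}_{\rhoB}=\tr(A^\dagger\MBk)$ for every operator $A$, where $\inner{A}{B}^{(\la)}_{\rhoB}:=\plam\tr(A^\dagger\rhoB B)+\mlam\tr(B\rhoB A^\dagger)$ is the semi-inner product whose Gram matrix of Hermitian arguments is $Z_B^{(\la)}$. Because this form is positive for $\la\in[-1,1]$, the Gram matrix of the shifted operators is positive semidefinite, $\big[\inner{X_j-L^{(\la)}_j}{X_k-L^{(\la)}_k}^{(\la)}_{\rhoB}\big]\ge0$. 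Expanding it with the reproducing identity, and using $\big[\inner{L^{(\la)}_j}{L^{(\la)}_k}^{(\la)}_{\rhoB}\big]=\ol{K^{(\la)}}$ (the conjugate of $K^{(\la)}$, since $\tr({L^{(\la)}_j}^{\dagger}\MBk)=\ol{\tr(\MBk L^{(\la)}_j)}$), I obtain the RLD-type matrix inequality
\begin{equation*}
Z_B^{(\la)}(X)\ \ge\ T(X)+T(X)^{\top}-\ol{K^{(\la)}}.
\end{equation*}

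The closed form then follows by relaxing the constraint $V\ge Z_B^{(\la)}(X)$ to the weaker $V\ge T(X)+T(X)^{\top}-\ol{K^{(\la)}}$, which only enlarges the feasible set and hence keeps the bound valid. Setting $V'':=V-T(X)-T(X)^{\top}$, the objective becomes exactly $\Tr[W V'']+\ol{w}$, and the constraint becomes $V''\ge-\ol{K^{(\la)}}$. Here is where the reality of $T(X)$ pays off: since $T(X)+T(X)^{\top}$ is real symmetric, $V''$ is again real symmetric, so Lemma~6.6.1 applies and yields $\min_{V''}\Tr[W V'']=\Tr[W\Real(-\ol{K^{(\la)}})]+\Tr|\sq{W}\,\Imag(-\ol{K^{(\la)}})\,\sq{W}|$. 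Using $\Real\ol{K^{(\la)}}=\Real K^{(\la)}$ and $|\Imag(-\ol{K^{(\la)}})|=|\Imag K^{(\la)}|$ this equals $\Ccal^{(\la)}_\mathrm{BLD}(W)-\ol{w}$; moreover the minimand no longer depends on $X$, so the optimization collapses and $\Ccal_\mathrm{BH}^{(\la)}\ge\Ccal^{(\la)}_\mathrm{BLD}(W)$. I expect the genuine obstacle to be precisely the interplay exploited here: establishing the matrix inequality above together with the identification $\ol{K^{(\la)}}=\big[\inner{L^{(\la)}_j}{L^{(\la)}_k}\big]$, and recognizing that the imaginary correction $\Tr|\sq{W}\,\Imag K^{(\la)}\sq{W}|$ survives only because Hermiticity of $X$ forces $T(X)$ to be real and thus keeps $V''$ real symmetric. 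Any route that bounds $\Tr[W Z_B^{(\la)}(X)]$ directly, or that tries to substitute the non-Hermitian $L^{(\la)}$ for $X$, annihilates the antisymmetric part under $\Tr[W\,\cdot\,]$ and loses this term.

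Finally, the maximization statement is immediate: the first part gives $R_\mathrm{B}\ge\Ccal^{(\la)}_\mathrm{BLD}(W)$ for every $\la\in[-1,1]$, so taking the supremum over $\la$ preserves the inequality, $R_\mathrm{B}\ge\max_{\la}\Ccal^{(\la)}_\mathrm{BLD}(W)=C^{\max}_\mathrm{BLD}$; the location of the optimizing $\la$, if desired, would be analyzed separately along the lines of Theorem~\ref{thm:BH2}.
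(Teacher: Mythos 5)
Your proposal is correct and follows essentially the same route as the paper's proof: specialize Theorem~\ref{thm:BH2} (via the representation behind Eq.~\eqref{eq:key_eq}) to constant $W$, introduce the $\la$LD operators through $\MBj = \plam\rhoB L^{(\la)}_j+\mlam L^{(\la)}_j\rhoB$, eliminate $X$ by positivity of the Gram matrix of $X_j-L^{(\la)}_j$ (the paper phrases this as dropping Hermiticity of $X$ and optimizing over $X$, you phrase it as relaxing the semidefinite constraint---the same completing-the-square algebra), and finish with Lemma~6.6.1 of \cite{holevobook}. If anything, your bookkeeping is slightly more careful than the paper's, which asserts $\sfZB[L^{(\la)}]=K^{(\la)}$ where the Gram matrix is actually $\ol{K^{(\la)}}$ (elementwise conjugate, since $L^{(\la)}_j$ need not be Hermitian); as you note, this is immaterial for the final bound because $\Real \ol{K^{(\la)}}=\Real K^{(\la)}$ and $\Tr|\sq{W}\,\Imag \ol{K^{(\la)}}\,\sq{W}|=\Tr|\sq{W}\,\Imag K^{(\la)}\,\sq{W}|$.
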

We have two remarks before proving this theorem. 
First, note that the last form of the bound is a Bayesian version of the maximum logarithmic derivative Cram\'e-Rao bound proposed by Yamagata in point estimation \cite{yamagata2021maximum}. 
Second, due to the process of deriving the Bayesian $\la$LD bound, it is tighter than the Personick bound \cite{personick71} and its generalization \cite{rubio2019quantum,demkowicz2020multi}, and the Bayesian bound proposed by Holevo \cite{holevo_qest,holevo1977commutation}. In other words, the above one-parameter family of bounds includes well-known bounds as special cases. This will be shown later. 

\begin{proof}
When $W$ is $\theta$-independent, we have
\begin{align*}
\bbSbar&=W\otimes \rhoB,\\
 \Mbarj&=\sum_kW_{jk}\MBk,\\
\sfwbar &=\sfTr{WM}, 
\end{align*}
where $M_{jk}:=\int d\theta\,\prior\theta_j\theta_k$.  
Define one-parameter family of inner products by 
\begin{equation}
\lamin{X}{Y}:=\tr\left[\rhoB (\plam YX^\dagger+ \mlam X^\dagger Y)\right],  
\end{equation}
for $\lambda\in[-1,1]$.
Then the Bayesian bound of Theorem \ref{thm:BH2} is expressed by using Eq.~\eqref{eq:key_eq} as
\[
\Ccal_\mathrm{BH}^{(\la)}=\min_{V,X}\{ \Tr[W V]+\ol{w}\,|\,V\ge \sfZB[X]-\sfHB[X]-\sfHB[X]^T\}, 
\]
where $Z^{(\la)}_{\mathrm{B},jk}[X]:=\lamin{X_j}{X_k}$, $H_{\mathrm{B},jk}[X]:=\tr{[\MBj X_k]}$, and 
$V$ is real symmetric. We can carry out the optimization over $X$ first to get a lower bound. 
To do this, we note the constraint is also written as 
\[
\sfZB[X]-\sfHB[X]-\sfHB[X]^T = \sfZB[X-L^{(\la)}]-\sfZB[L^{(\la)}], 
\]
which follows from the definition of $L^{(\la)}_j$. 
Note that $L^{(\la)}_j$ are not Hermitian in general, yet we can obtain a lower bound by removing 
this condition on $X$. Hence we have
\begin{equation}
\Ccal_\mathrm{BH}^{(\la)}\ge\min_{V}\{ \Tr[W V]+\ol{w}\,|\,V\ge -\sfZB[L^{(\la)}]\}, 
\end{equation}
where $V$ real symmetric. The last step follows from the Lemma (Lemma 6.6.1 in Ref.~\cite{holevobook}) to get
\begin{align*}
\Ccal_\mathrm{BH}^{(\la)}&\ge -\Tr[W\Re\sfZB[L^{(\la)}]]+\Tr|\sqrt{W}\Im \sfZB[L^{(\la)}]\sqrt{W}|+\ol{w}\\
&=\Ccal^{(\la)}_\mathrm{BLD} (W), 
\end{align*}
since $\sfZB[L^{(\la)}]=K^{(\la)}$. 
\end{proof}

By the above proof, we can immediately show that the Personick bound is recovered when $\lambda=0$, 
whereas the Holevo's result corresponds to $\lambda=1$. To do this, we can set $\theta$-independent $W$ as the rank-1 projector $W=c c^{\dagger}$ with $c \in\C^n$.
  Next, we obtain a lower bound for $\Ccal_\mathrm{BLD}^{(\la)}$.
  This gives the desired result.
\begin{align*}
  \Ccal_\mathrm{BLD}^{(\la)}-c^\dagger M c
  &\ge -\lamin{ L_c^{(\la)}}{L_c^{(\la)}}
   =-c^\dagger K^{(\la)} c,
\end{align*}
where we set $L_c^{(\la)}=\sum_j c_j L_j^{(\la)}$.  
Since this is true for any choice of $c$, we prove the matrix inequality $V_\mathrm{B} \ge M- K^{(\la)}$. If we choose $\la=0$, thereby, we recover the Personick bound \cite{personick71}.  
%    
%The Bayesian $\lambda$LD bound in parameter-independent weight is the generalization of the Holevo RLD bound as $\lambda=1$. The items in the case of parameter-independent weight are automatically equal to the items of the Holevo RLD bound.
%\begin{equation}
%\begin{split}
%    \Ccal_\mathrm{BLD}^{\la=1} &= \ol{w} - \Tr W \Real K^{\la=1} + \Tr |\sq{W} \Imag K^{\la=1} \sq{W} |\\
%    &= \ol{w} - \Tr W \Real K^R + \Tr |\sq{W} \Imag K^R \sq{W} | \\
%    &= \Ccal_\mathrm{RLD}.
%\end{split}
%\end{equation}
%As a result, we attain the goal of utilizing $\la$LD which derives a family of bounds and is able to be narrowed in two known bounds when setting $\la=0,1$.

\section{Conclusion}
In summary, we introduce one-parameter family of the Bayesian Holevo type bounds, a new lower bound for the Bayes risk, which addresses the parameter-independent weight matrices in this domain. Our bound exhibits superior tightness compared to the direct extension of the Bayesian classical optimum bound, while maintaining computational simplicity relative to the Bayesian Holevo-type bound. 
Moreover, we derive the closed form of the Bayesian $\lambda$LD type lower bounds. Our analysis demonstrates its superiority over existing lower bounds, including the symmetric logarithmic derivative type bound \cite{personick71} and the right logarithmic derivative type bound \cite{holevo_qest,holevo1977commutation}, which are obtained by setting $\lambda$ to 0 and 1, respectively. In other words, the derived closed form offers a generalized framework that encompasses other bounds, enhancing its practical utility and theoretical significance.

\bibliographystyle{IEEEtran}
\bibliography{ref}

\vspace{12pt}
\color{red}
%IEEE conference templates contain guidance text for composing and formatting conference papers. Please ensure that all template text is removed from your conference paper prior to submission to the conference. Failure to remove the template text from your paper may result in your paper not being published.

\end{document}